\newcommand\given{\;\Big\lvert\;}
\DeclareMathOperator*{\extr}{extr}
\title{Fr\'echet similarity of closed polygonal curves}
\author{Schlesinger M.I., Vodolazskiy E.V., Yakovenko V.M.}
\begin{document}

\markboth{Schlesinger M.I., Vodolazskiy E.V., Yakovenko V.M.}{Fr\'echet similarity of closed polygonal curves}

\catchline

\author{ Schlesinger M.I.}

\address{
schles@irtc.org.ua
}

\author{Vodolazskiy E.V.}

\address{
    waterlaz@gmail.com
}

\author{Yakovenko V.M.}

\address{
    asacynloki@gmail.com
}

\address{
International Research and Training Centre \\
of Information Technologies and Systems \\
National Academy of Science of Ukraine \\
Cybernetica Centre, \\
prospect Academica Glushkova, 40, \\
03680, Kiev-680, GSP, Ukraine. 
}

\maketitle

\pub{Received (received date)}{Revised (revised date)}
{Communicated by (Name)}

\begin{abstract}
The article analyzes similarity of closed polygonal curves with respect to the Fr\'echet metric, 
which is stronger than the well-known Hausdorff metric and therefore is more appropriate in some applications.
An algorithm is described that determines whether the Fr\'echet distance between two closed polygonal curves with $m$ and $n$ vertices is less than a given number~$\varepsilon$.
The algorithm takes $O(m n)$ time whereas the previously known algorithms take $O(m n\log (m n))$ time.  

\keywords{computational geometry, Fr\'echet distance, computational complexity.}
\end{abstract}


\section{Introduction}
The Fr\'echet metric is used for 
cyclic process analysis and image processing \cite{ICDAR.2007.121}.
It is stronger than the well-known Hausdorff metric \cite{Chen02},\cite{rockafellar2011variational},
\cite{ismm2011} and therefore is more appropriate in some applications \cite{frechet}. 
The Fr\'echet metric for closed polygonal curves has been studied in a paper \cite{frechet} by Alt and Godau.
They propose an algorithm that determines whether the distance between two closed polygonal curves with $m$ and $n$ vertices is greater than a given number~$\varepsilon$.
The complexity of the algorithm is $O(m n \log (m n))$ on a random access machine that performs arithmetical operations 
and computes square roots in constant time. 
Our paper shows that the computational complexity of the problem is less than $O(m n \log (m n))$  and provides an algorithm that takes $O(m n)$ time to solve the problem.
The exact formulation of the problem is given in Section \ref{Definitions}.
Sections \ref{Diagrams} describes the concepts of the original paper \cite{frechet}, which with slight modifications serve as the basis for our paper.
The difference between the proposed and known approaches is specified at the end of Section \ref{Diagrams}.
Sections \ref{Achievability}-\ref{ResultSection} describe the proposed approach.


\section{\label{Definitions} Problem definition.}
Let $\mathbb{R}^k$ be a linear space with the Euclidean distance $\text{   }d:\mathbb{R}^k\times \mathbb{R}^k \rightarrow \mathbb{R}$.

\begin{definition}
A closed $m$-gonal curve $X$ is a pair $\langle \bar{x}, f_X \rangle$ where $\bar{x}$ is a sequence 
$(x_0, x_1, \cdots, x_m=x_0)$, $x_i \in \mathbb{R}^k$, and $f_X$ is a function
$[0,m]\rightarrow\mathbb{R}^k$  such that\\
$f_X(i+\alpha)=(1-\alpha) x_i + \alpha x_{i+1}$
for $i \in \{0, 1, \dots , m-1\}$ and $0 \le \alpha \le 1$.
\end{definition}
\begin{definition}
A cyclic shift of an interval $[0,m]$ by a value $\tau \in [0,m]$ is a function
$s:[0,m] \rightarrow [0,m]$ that depends on a parameter $\tau$ such that 
$s(t;\tau) =t+\tau$ for $t+\tau \le m$ and $s(t;\tau) =t+\tau - m$ for $t+\tau > m$. 
\end{definition}
For any number $m$ let $W_m$ be the set of all monotonically non-decreasing continuous functions
$[0,1]\rightarrow[0,m]$ such that $w(0)=0,w(1)=m$. 
\begin{definition}
A function $\varphi:[0,1]\rightarrow \mathbb{R}^k$ is called a monotone reparametrization of a closed $m$-gonal curve $X=\langle \bar{x}, f_X \rangle$
if a function $w \in W_m$ and a number $\tau \in [0,m]$ exist such that $\varphi(t)=f_X(s(w(t);\tau))$ for all $t \in [0,1]$.
\end{definition}
For given closed polygonal curves $X$ and $Y$ denote $\Phi_X$ and $\Phi_Y$ sets of their reparametrizations.
\begin{definition}
The Fr\'echet distance between closed polygonal curves $X$ and $Y$ is
\begin{equation} \nonumber
\delta(X,Y) = \min_{\varphi_X \in \Phi_X} \min_ {\varphi_Y \in \Phi_Y } \max_{0\le t \le 1}d(\varphi_X(t) , \varphi_Y(t)).
\end{equation}
\end{definition}
The problem consists in developing an algorithm that determines whether $\delta(X,Y) \le \varepsilon$
for given closed polygonal curves $X$ and $Y$ and a number~$\varepsilon$.


\section{\label{Diagrams}The free space diagram and pointers.} 
The problem's analysis is based on the concept of a free space diagram introduced by Alt and Godau \cite{frechet} in the following way. 
For two numbers $m$ and $n$ let us define a rectangle $\widetilde{D} = [0,m] \times [0,n]$ with points $(u,v) \in \widetilde{D} $.
For two closed polygonal curves $X$ and $Y$ with $m$ and $n$ vertices and a number $\varepsilon$ a subset
$\widetilde{D}_\varepsilon= \{(u,v) \in  \widetilde{D}\given d(f_X(u),f_Y(v)) \le \varepsilon\}$ is defined. 
Let us also define a rectangle
$D = \widetilde{D} \cup \{(u+m,v)\given(u,v) \in \widetilde{D}\}$ with its subset 
$D_\varepsilon = \widetilde{D}_\varepsilon \cup \{(u+m,v)\given(u,v) \in \widetilde{D}_\varepsilon\}$ called a free space.
Denote $T$, $B$, $L$ and $R$
the top, bottom, left and right sides of the rectangle $D$. Denote $D_{ij}$ a subset $[i-1,i] \times [j-1,j]$ and call it a cell.
\begin{definition} \label {Monotony}
A monotone non-decreasing path (or simply, a monotone path) is a connected subset $\gamma \subset D_\varepsilon$ such that $(u~-~u')(v~-~v') \ge 0$ for any two points 
$(u,v) \in \gamma$, $(u',v') \in \gamma$.
\end{definition}
Note that this definition allows a monotone path to contain vertical segments. 
That is why a condition $(u~-~u')(v~-~v') \ge 0$ is used instead of standard form $(v~-~v')/(u~-~u') \ge 0$ of the definition of a non-decreasing function $v = f(u)$.  
\begin{definition}
Two points $(u,v) \in D$ and $(u',v') \in D$ are mutually reachable if and only if a monotone path $\gamma$ exists such that $(u,v) \in \gamma$, $(u',v') \in \gamma$.
\end{definition}
\begin{definition}
A point $(u,v) \in D_\varepsilon$ is reachable from the bottom if it is reachable from at least one point from $B$;
a point $(u,v) \in D_\varepsilon$ is reachable from the top if it is reachable from at least one point of $T$.
\end{definition}
Denote $g_\downarrow \subset D_\varepsilon$ a set of points reachable from the bottom and
$g^\uparrow \subset D_\varepsilon$ a set of points reachable from the top.

Let us define two pointer functions
\phantom{}\quad $r^\uparrow:g^\uparrow \rightarrow [0,2m] $ and \quad $r_\downarrow:g_\downarrow \rightarrow [0,2m]$. 
For $(u,v) \in g^\uparrow$ the pointer $r^\uparrow(u,v)$ is the maximum value $u^*$ such that $(u,v)$ is reachable from $(u^*,n) \in T$. 
For $(u,v) \in g_\downarrow \setminus B$ the pointer $r_\downarrow(u,v)$ is the maximum value $u^*$ such that $(u,v)$ is reachable from $(u^*,0) \in B$. 
For $(u,0) \in g_\downarrow \cap B$ the pointer $r_\downarrow(u,0)$ equals $u$.  
Figure \ref{Fig1} illustrates the introduced concepts and designations. \\
\begin{figure}[h]
  \centering
  \includegraphics*{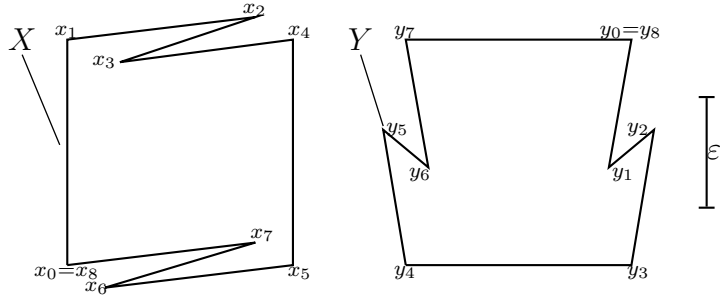}
  \caption{Two closed polygonal curves with an interval of length $\varepsilon$}
  \label{Fig0}
\end{figure}
\begin{figure}[h]
  \centering
  \includegraphics*[width=\textwidth]{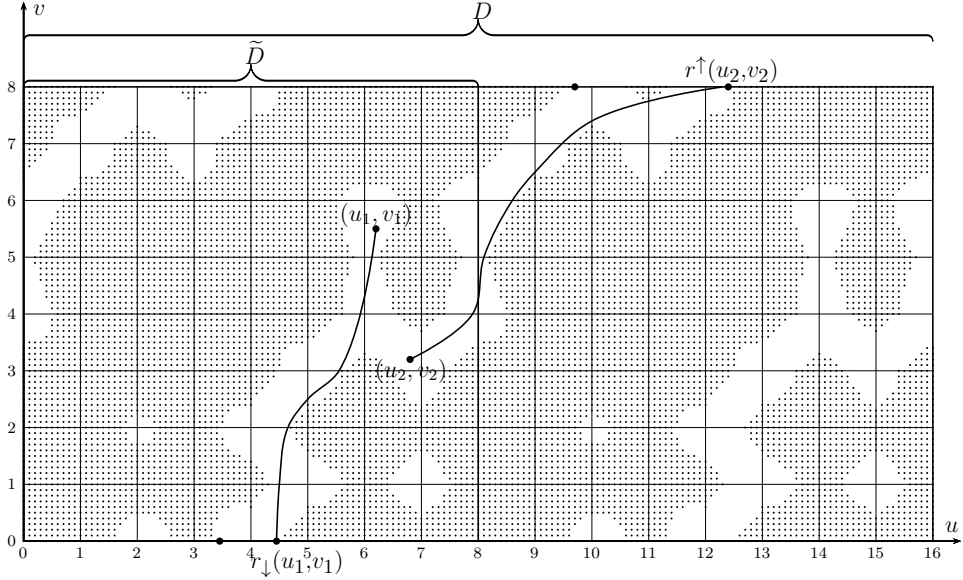}
  \caption{The free space diagram. The light area is the set $D_\varepsilon$.\\
        $r^\uparrow(u_2, v_2)$ is the rightmost reachable point on $T$ from $(u_2, v_2)$.\\
        $r_\downarrow(u_1, v_1)$ is the rightmost reachable point on $B$ from $(u_1, v_1)$.}
  \label{Fig1}
\end{figure}

We rely on the following lemma proved in the paper \cite{frechet} by Alt and Godau (see lemma 9 \cite{frechet}).
\begin{lemma} \label{AltFirst}
The distance between closed polygonal curves $X$ and $Y$  is not greater than $\varepsilon$ if and only if there exists a number 
$u \in [0,m]$, such that
the points $(u + m, n)$ and $(u,0)$ are mutually reachable.
\end{lemma}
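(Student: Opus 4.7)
The plan is to prove both implications by setting up a correspondence between monotone paths in the doubled free space $D_\varepsilon$ that run from the bottom edge at $(u,0)$ to the top edge at $(u+m,n)$ and pairs of admissible reparametrizations $(\varphi_X,\varphi_Y)$ whose pointwise distance is at most $\varepsilon$. The doubling of the $u$-axis is the key device: a full traversal of the closed curve $X$ starting at $f_X(u)$ corresponds to a monotone motion of the first coordinate from $u$ up to $u+m$ inside $[0,2m]$, absorbing the cyclic wrap-around of $X$ into an ordinary monotone path. For this to work cleanly, the curve $Y$ must be parametrized with $\tau_Y=0$, since the $v$-axis is not doubled.

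For the "if" direction, I would start from a monotone path $\gamma\subset D_\varepsilon$ containing $(u,0)$ and $(u+m,n)$ and extract a continuous monotone non-decreasing parametrization $(a(t),b(t))$, $t\in[0,1]$, with $a(0)=u$, $a(1)=u+m$, $b(0)=0$, $b(1)=n$; such a parametrization exists because every fiber $a+b=\mathrm{const}$ of a monotone set is a single point. Setting $w_X(t)=a(t)-u$, $\tau_X=u$, $w_Y=b$, $\tau_Y=0$ then yields admissible reparametrizations (with $w_X\in W_m$, $w_Y\in W_n$), and the identity $s(w_X(t);\tau_X)=a(t)$ when $a(t)\le m$ and $s(w_X(t);\tau_X)=a(t)-m$ otherwise, combined with the definition of $D_\varepsilon$, gives $d(\varphi_X(t),\varphi_Y(t))\le\varepsilon$ for every $t$, hence $\delta(X,Y)\le\varepsilon$.

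For the "only if" direction, I would begin with admissible $(\varphi_X,\varphi_Y)$ with pointwise distance at most $\varepsilon$ and first reduce to the case $\tau_Y=0$. By continuity and surjectivity of $w_Y$ there is a $t^*\in[0,1]$ with $w_Y(t^*)=n-\tau_Y$, that is, $\varphi_Y(t^*)=f_Y(0)$. Cyclically reshuffle the time parameter so that the new $t=0$ corresponds to the old $t=t^*$; the resulting pair $(\psi_X,\psi_Y)$ still satisfies the pointwise $\varepsilon$-bound and has $\tau'_Y=0$, together with some new $\tau'_X=s(w_X(t^*);\tau_X)$ and $w'_X\in W_m$, $w'_Y\in W_n$. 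Then the curve $t\mapsto(w'_X(t)+\tau'_X,\,w'_Y(t))$ is a continuous monotone path in $D$ from $(\tau'_X,0)\in B$ to $(\tau'_X+m,n)\in T$, and its image lies in $D_\varepsilon$ by the same identity used in the other direction. Taking $u:=\tau'_X$ exhibits the required mutual reachability.

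The main obstacle is verifying that the reshuffled pair $(\psi_X,\psi_Y)$ is genuinely a pair of admissible reparametrizations in the sense of the definitions, with $w'_X,w'_Y$ continuous, non-decreasing and having the correct boundary values $0$ and $m$ (respectively $n$). The delicate moment is the seam where the segments before and after $t^*$ are concatenated: at that instant the $Y$-parameter $s(w_Y;\tau_Y)$ passes through the identification $0\equiv n$, so continuity of $\psi_Y$ at the seam relies on $f_Y(0)=f_Y(n)$, and the boundary clause in the definition of the cyclic shift $s$ has to be applied with care to ensure that $w'_Y(0)=0$ and $w'_Y(1)=n$ as required by $W_n$. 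Once this reduction is carried out rigorously, the remainder of both implications is straightforward bookkeeping.
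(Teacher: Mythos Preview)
The paper does not actually prove this lemma: it states it and defers the proof to Alt and Godau's original article (their Lemma~9), so there is no in-paper argument to compare against. Your proposal is a correct self-contained proof and follows the natural approach one would expect Alt--Godau to take: identify monotone paths from $(u,0)$ to $(u+m,n)$ in the doubled free space with pairs of admissible reparametrizations realizing the $\varepsilon$-bound, using the doubling of the $u$-axis to unwrap the cyclic shift of $X$ and a preliminary time-reshuffle to normalize $\tau_Y=0$.

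Your identification of the one genuinely delicate point is accurate. In the ``only if'' direction, after locating $t^\ast$ with $w_Y(t^\ast)=n-\tau_Y$ and concatenating the two time-segments, the explicit formula
\[
w'_Y(t)=
\begin{cases}
w_Y(t+t^\ast)-w_Y(t^\ast), & 0\le t\le 1-t^\ast,\\
w_Y(t+t^\ast-1)+\tau_Y, & 1-t^\ast\le t\le 1,
\end{cases}
\]
gives a continuous non-decreasing map with $w'_Y(0)=0$, $w'_Y(1)=n$, and $f_Y(w'_Y(t))=\varphi_Y(\sigma(t))$ (using $f_Y(0)=f_Y(n)$ at the seam); the analogous construction for $X$ produces $w'_X\in W_m$ and $\tau'_X\in[0,m]$, and then $t\mapsto(w'_X(t)+\tau'_X,\,w'_Y(t))$ is the desired monotone path in $D_\varepsilon$. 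Writing these formulas out would remove the only place where your outline asks the reader to take something on faith.
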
 
The following lemma is similar to Lemma 10 \cite{frechet} as well as its proof.
\begin{lemma} \label{OurSecond}
Two points $(u_{t},n) \in T$ and $(u_{b},0) \in B$ are mutually reachable if and only if\\ 
\phantom{} \quad\quad $(u_{t},n) \in g_\downarrow $, \quad $(u_{b},0) \in g^\uparrow$, \quad $u_{t} \le r^\uparrow(u_{b},0)$, 
\quad $u_{b} \le r_\downarrow(u_{t},n)$. 
\end{lemma}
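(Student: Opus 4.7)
The plan is to handle the two implications separately. For necessity, given a monotone path $\gamma$ containing both $(u_t,n)$ and $(u_b,0)$, the memberships $(u_t,n)\in g_\downarrow$ and $(u_b,0)\in g^\uparrow$ are immediate, because $\gamma$ itself witnesses the reachability of these points from $B$ and from $T$ respectively. The inequalities $u_t\le r^\uparrow(u_b,0)$ and $u_b\le r_\downarrow(u_t,n)$ then follow from the maximality clauses in the definitions of the pointers: $u_t$ and $u_b$ each lie in the set over which the corresponding pointer takes its maximum.

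For sufficiency, write $u_1=r^\uparrow(u_b,0)$ and $u_2=r_\downarrow(u_t,n)$, so that the hypotheses become $u_t\le u_1$ and $u_b\le u_2$. The definitions of the pointers then supply two monotone witness paths: $\gamma_1\subset D_\varepsilon$ containing $(u_b,0)$ and $(u_1,n)$, and $\gamma_2\subset D_\varepsilon$ containing $(u_2,0)$ and $(u_t,n)$. Applying Definition~\ref{Monotony} to the endpoints of each path yields $u_b\le u_1$ and $u_2\le u_t$, whence $u_b\le u_2\le u_t\le u_1$. The idea is then to produce a monotone path from $(u_b,0)$ to $(u_t,n)$ by splicing $\gamma_1$ and $\gamma_2$ at a common point. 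Once a point $(u^*,v^*)\in\gamma_1\cap\gamma_2$ has been produced, the splice itself is routine: form $\gamma^*$ as the union of the arc of $\gamma_1$ from $(u_b,0)$ to $(u^*,v^*)$ with the arc of $\gamma_2$ from $(u^*,v^*)$ to $(u_t,n)$. Connectedness and inclusion in $D_\varepsilon$ are clear, and the individual monotonicities force $u\le u^*\le u'$ and $v\le v^*\le v'$ for every $(u,v)$ on the first arc and $(u',v')$ on the second, so $(u-u')(v-v')\ge 0$.

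The main obstacle is therefore to prove that $\gamma_1$ and $\gamma_2$ have a common point. Heuristically $\gamma_1$ joins the bottom edge to the top edge of the strip $D$ and so, by a Jordan-curve-type argument, separates $D$ into a ``left'' and a ``right'' component; the chain $u_b\le u_2\le u_t\le u_1$ places $(u_2,0)$ on or to the right of $\gamma_1$ at the bottom edge and $(u_t,n)$ on or to the left of $\gamma_1$ at the top edge, so the connected set $\gamma_2$ is forced to meet $\gamma_1$. Making this topological picture rigorous is the delicate point, because monotone paths may contain horizontal or vertical plateaus and need not be graphs of continuous functions. I would formalize it either via such a separation statement for monotone bottom-to-top paths, or by comparing the right envelope $\sup\{u:(u,v)\in\gamma_1\}$ with the left envelope $\inf\{u:(u,v)\in\gamma_2\}$ as monotone functions of $v$ and applying an intermediate-value-style argument; the authors' remark that the statement mirrors Lemma~10 of \cite{frechet} suggests that a short proof along these lines should suffice.
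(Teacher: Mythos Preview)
Your proposal is correct and follows essentially the same route as the paper: the necessity direction is dismissed as immediate, and for sufficiency you take the two witness paths supplied by the pointer definitions, argue that the inequalities force them to intersect, and splice them at a common point into a monotone path from $(u_b,0)$ to $(u_t,n)$. The only difference is one of detail: the paper simply asserts, with reference to a figure, that the two connected paths ``due to conditions $u_t\le r^\uparrow(u_b,0)$, $u_b\le r_\downarrow(u_t,n)$ intersect in at least one point,'' whereas you explicitly flag the crossing as the delicate step and sketch how to make it rigorous via a separation or envelope argument.
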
 
\begin{proof}
Obviously, if $(u_{t},n) \in T$ and $(u_{b},0) \in B$ are mutually reachable then 
$(u_{t},n) \in g_\downarrow $, $(u_{b},0) \in g^\uparrow$ and $u_{t} \le r^\uparrow(u_{b},0)$, $u_{b} \le r_\downarrow(u_{t},n)$.

The reverse implication is also valid, which is illustrated by Figure \ref{Fig5}.
\begin{figure}
  \centering
  \includegraphics{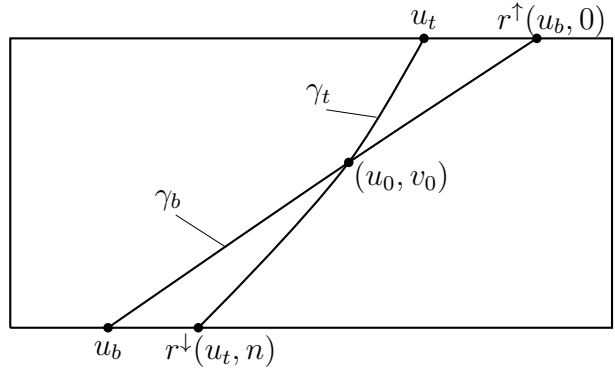}
  \caption{There is a path between $(u_{b}, 0)$ and $(u_{t},n)$}
  \label{Fig5}
\end{figure}
Let $\gamma_{t} \subset D_\varepsilon$ be a monotone path
from $(u_t,n)$ to $(r_\downarrow(u_{t},n),0)$ and  $\gamma_{b} \subset D_\varepsilon$ be a monotone path from $(u_{b},0)$ to $(r^\uparrow(u_{b},0),n)$. 
Both paths are connected subsets that due to conditions $u_{t} \le r^\uparrow(u_{b},0)$, $u_{b} \le r_\downarrow(u_{t},n)$ intersect in at least one point $(u_0, v_0)$.
Let us consider a path $\gamma$ that consists of a segment of $\gamma_{b}$ from $(u_{b},0)$ to $(u_0, v_0)$ 
and a segment of $\gamma^{t}$ from $(u_0, v_0)$ to $(u_{t},n)$. 
The path $\gamma$ is monotone, it is contained inside $D_\varepsilon$ and connects $(u_{b},0)$ and $(u_{t},n)$.
\end{proof}
According to Lemmas \ref{AltFirst} and \ref{OurSecond} testing condition $\delta(X,Y) \le \varepsilon$ is reduced to finding value $u$ that fulfills
\begin{equation} \label{OurDecision}
(u,0) \in g^\uparrow, \quad (u+m,n) \in g_\downarrow, \quad u+m \le r^\uparrow(u,0),
\quad u \le r_\downarrow(u+m,n).  
\end{equation}

The pointer functions $r^\uparrow$ and $r_\downarrow$ are similar to the pointers defined by Alt and Godau.
However, the pointer function $r^\uparrow$ takes values from $T$ and function $r_\downarrow$ takes values from $B$,
while Alt and Godau consider pointers with values from $T \cup R$ and with values from $L \cup B$.
The Alt's and Godau's pointers allow to use a divide and conquer type algorithm that
merges either vertically or horizontally two diagrams with known pointers and
obtains a bigger diagram with pointers for the new diagram.
The pointers can be computed in $O(1)$ time for diagrams containing only one cell. 
By sequentially merging smaller diagrams into bigger ones, the pointers for the whole 
diagram can be obtained in $O(m n \log (m n))$ time.

In this paper we rely on a recurrent relation between pointer values $r^\uparrow$ and $r_\downarrow$ on cell borders.
It is trivial to compute $r^\uparrow$ on $T$ and $r_\downarrow$ on $B$.
Our algorithm does not use the divide and conquer approach but proceeds cell by cell. 
It propagates pointers on each cell's borders using the recurrent relation
and eventually obtains the values $r^\uparrow$ on $B$ and $r_\downarrow$ on $T$ in $O(mn)$ time.

\section{\label{Achievability}Formal properties of pointer functions.}
For each $1 \le i \le 2m$ and $1 \le j \le n$ denote  $T_{ij}$ and $R_{ij}$ the top and right borders of a square cell $D_{ij}=[i-1,i] \times [j-1,j]$ 
and extend these denotations so that $T_{i0}$ is a bottom border of $D_{i1}$ and $R_{0j}$ is  a left border of $D_{1j}$. 
Let us designate 
\begin{eqnarray*}
    & B_{ij} = T_{i(j-1)},\quad & L_{ij} = R_{(i-1)j},\\
    & TR_{ij} = T_{ij} \cup R_{ij}, \quad
    & LB_{ij} = L_{ij} \cup B_{ij}.
\end{eqnarray*}

In order to test (\ref{OurDecision}) the sets $g^\uparrow$, $g_\downarrow$ and functions $r^\uparrow$, $r_\downarrow$
have to be expressed with a finite data structure.
It is significant that intersection 
$D_\varepsilon \cap D_{ij}$ is convex for each pair $(i,j)$ \cite{frechet}. 
It is not difficult to prove that for each pair $(i,j)$ 
the intersections $g_\downarrow \cap T_{ij}$, $g_\downarrow \cap R_{ij}$ are also convex, 
so that each of these intersections is a single interval (line segment) on the border of $D_{ij}$. 
Simple recurrent relations hold for these intervals, 
so that intervals $g_\downarrow \cap T_{ij}$ and $g_\downarrow \cap R_{ij}$ can be computed based on $g_\downarrow \cap B_{ij}$ and $g_\downarrow \cap L_{ij}$ 
as well as $g^\uparrow \cap B_{ij}$ and $g^\uparrow \cap L_{ij}$ can be computed based on $g^\uparrow \cap T_{ij}$ and $g^\uparrow \cap R_{ij}$. 
The Algorithm 1 described in \cite{frechet} for slightly different purposes is an example of this computation. 
Each step of the recursion takes constant time, consequently, 
computing intersections $g_\downarrow \cap T_{ij}$, $g_\downarrow \cap R_{ij}$, $g^\uparrow \cap B_{ij}$ and $g^\uparrow \cap L_{ij}$ for all $(i,j)$ takes $O(mn)$ time. 
Therefore, from now on it is assumed that these line segments are available.

As for the functions $r^\uparrow$ and $r_\downarrow$ they require more detailed analysis.
Let us define partial ordering $\preccurlyeq$ on the set $D$ such that 
$(u_1,v_1) \preccurlyeq (u_2,v_2)$ if and only if $u_1 \le u_2$ and $v_1 \ge v_2$.
On each subset $TR_{ij}$ and $LB_{ij}$ relation $\preccurlyeq$ is a complete ordering.
Therefore, for each closed subset $b \subset TR_{ij}$ and for each closed subset $b \subset LB_{ij}$
a symbol $\extr\limits_\preccurlyeq b$ will be used as a designation of such point $(u^*,v^*) \in b$ 
that $(u',v') \preccurlyeq (u^*,v^*)$ for each $(u',v') \in b$.   
Both $r_\downarrow$ and $r^\uparrow$ are monotone functions of their argument in a sense of the following lemma.  
\begin{lemma} \label{MyMonotony}\\
Let $(u_1, v_1), (u_2, v_2) \in g_\downarrow$. If $(u_1,v_1) \preccurlyeq (u_2,v_2)$ then $r_\downarrow(u_1, v_1) \le r_\downarrow(u_2, v_2)$.\\
Let $(u_1, v_1), (u_2, v_2) \in g^\uparrow$. If $(u_1,v_1) \preccurlyeq (u_2,v_2)$
then $r^\uparrow(u_1, v_1) \le r^\uparrow(u_2, v_2)$.
\end{lemma}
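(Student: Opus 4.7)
The plan is to prove the $r_\downarrow$ statement in detail and obtain the $r^\uparrow$ statement by the analogous argument with the roles of $B$ and $T$ swapped. Write $u^{*} := r_\downarrow(u_1, v_1)$ and fix a monotone path $\gamma_1 \subset D_\varepsilon$ from $(u^{*},0)$ to $(u_1,v_1)$, which exists by the definition of $r_\downarrow$. Since $(u_2,v_2) \in g_\downarrow$, I also fix a monotone path $\gamma_2 \subset D_\varepsilon$ from some $(\tilde u, 0) \in B$ to $(u_2,v_2)$. If $\tilde u \ge u^{*}$ then $\gamma_2$ already witnesses $r_\downarrow(u_2,v_2) \ge \tilde u \ge u^{*}$ and there is nothing to do, so from here on I assume $\tilde u < u^{*}$.

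The key step is to show that $\gamma_1$ and $\gamma_2$ share a common point, after which they can be spliced together exactly as in the proof of Lemma~\ref{OurSecond}. For this I extend $\gamma_1$ upward by the vertical segment $\{u_1\}\times[v_1,n]$ to obtain a continuous arc $\Gamma$ inside $D$ joining $B$ to $T$; such an arc separates $D$ into a left and a right component. The condition $\tilde u < u^{*}$ puts the starting point $(\tilde u, 0)$ of $\gamma_2$ in the left component. Monotonicity of $\gamma_1$ forces every point of $\gamma_1$ at height $v \le v_1$ to have abscissa $\le u_1$, so at height $v_2 \le v_1$ the arc $\Gamma$ sits at abscissa $\le u_1 \le u_2$; combined with $v_2 \le v_1$ this places $(u_2,v_2)$ either on $\Gamma$ or in the right component. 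Since $\gamma_2$ is connected it must meet $\Gamma$, and because $\gamma_2 \subset [0,2m]\times[0,v_2]$ by its own monotonicity, the meeting cannot occur on the vertical extension (which sits above $v_1 \ge v_2$). Hence there is a point $(u_0,v_0) \in \gamma_1 \cap \gamma_2$.

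Finally, I let $\gamma$ be the union of the subpath of $\gamma_1$ from $(u^{*},0)$ to $(u_0,v_0)$ with the subpath of $\gamma_2$ from $(u_0,v_0)$ to $(u_2,v_2)$. By monotonicity of $\gamma_1$, every point on the first piece satisfies $u \le u_0$ and $v \le v_0$, and by monotonicity of $\gamma_2$, every point on the second piece satisfies $u \ge u_0$ and $v \ge v_0$. Hence every pair of points of $\gamma$ is comparable in the product order, so $\gamma$ satisfies Definition~\ref{Monotony}. As $\gamma \subset D_\varepsilon$ connects $(u^{*},0)$ to $(u_2,v_2)$, this shows $r_\downarrow(u_2,v_2) \ge u^{*}$, which is the desired inequality.

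The main obstacle I foresee is the topological separation step: one must verify that $\Gamma$ genuinely disconnects $D$ into two components and that the two endpoints of $\gamma_2$ lie on opposite sides. Some care is needed here because Definition~\ref{Monotony} allows $\gamma_1$ to contain nontrivial horizontal or vertical pieces, so before running the separation argument one should, as is standard, replace $\gamma_1$ by a simple arc with the same monotonicity property. The remaining splicing step is routine and mirrors the gluing already carried out in Lemma~\ref{OurSecond}.
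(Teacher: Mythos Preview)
Your argument is correct and follows essentially the same route as the paper: both proofs hinge on the fact that the two monotone paths must intersect, then splice them at the intersection point to reach a point on $B$ at abscissa $r_\downarrow(u_1,v_1)$ from $(u_2,v_2)$. The only cosmetic differences are that the paper phrases it as a proof by contradiction (taking $\gamma_2$ to end at $(r_\downarrow(u_2,v_2),0)$ specifically) and dispatches the intersection claim with a picture, whereas you argue directly and supply the Jordan-type separation justification that the paper omits.
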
 
\begin{proof}
Let $\gamma_1$ and $\gamma_2$ be two monotone paths that connect  $(u_1, v_1)$ with $(r_\downarrow(u_1, v_1), 0)$ and $(u_2, v_2)$ with $(r_\downarrow(u_2, v_2),0)$ respectively.
Let us assume that $r_\downarrow(u_1, v_1) > r_\downarrow(u_2, v_2)$. 
As Figure~\ref{Fig6} shows the paths $\gamma_1$ and $\gamma_2$ intersect at some point  $(u_0,v_0)$.
Therefore, a monotone path that connects $(u_2,v_2)$ with $(r_\downarrow(u_1, v_1), 0)$ exists.
This path consists of a segment of $\gamma_1$ from $(r_\downarrow(u_1, v_1), 0)$ to $(u_0,v_0)$ and a segment of $\gamma_2$ from $(u_0,v_0)$ to  $(u_2, v_2)$. 
This means that the point $(u_2,v_2)$ is reachable from a point that is located to the right of the point $(r_\downarrow(u_2, v_2),0)$.
This contradicts with the definition of function $r_\downarrow$. 
Therefore, the assumption $r_\downarrow(u_1, v_1) > r_\downarrow(u_2, v_2)$ is proved to be wrong and the first statement of the theorem is proved. 
The proof of the second statement is similar.
\begin{figure}
  \centering
  \includegraphics{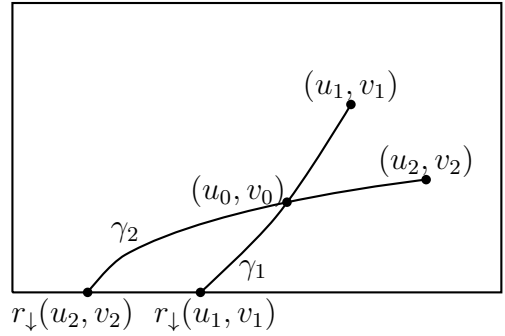}
  \caption{Monotonicity of $r_{\downarrow}$}
  \label{Fig6}
\end{figure}     
\end{proof}
Due to convexity of $D_\varepsilon \cap D_{ij}$ and monotonicity of functions $r_\downarrow$ and $r^\uparrow$ 
they satisfy the following recursive relations that will allow to use some sort of dynamic programming for their computation.
For $(u,v) \in g_\downarrow \cap TR_{ij}$ it holds that
\begin {equation} \label{RecGenDown}
r_\downarrow(u,v) = r_\downarrow\left(\extr\limits_\preccurlyeq\left\{(u'v') \in g_\downarrow \cap LB_{ij} \given  u' \le u, v' \le v \right\}\right)
\end {equation}
and for $(u,v) \in g^\uparrow \cap LB_{ij}$
\begin {equation} \label{(RecGenUp)}
r^\uparrow(u,v) = r^\uparrow\left(\extr\limits_\preccurlyeq\left\{(u'v') \in g^\uparrow \cap TR_{ij}\given u' \ge u, v' \ge v \right\}\right).
\end {equation}
Relations (\ref{RecGenDown}) and (\ref{(RecGenUp)}) immediately result in the following lemma.
\begin{lemma} \label{Constancy}
For each pair $i \in \{1, \dots , 2m\}$, $j \in \{1, \dots , n \}$ 
the pointer $r_\downarrow$ is constant on $g_\downarrow \cap R_{ij}$ and
the pointer $r^\uparrow$ is constant on $g^\uparrow \cap B_{ij}$. 
\end{lemma}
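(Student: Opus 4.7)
The plan is to deduce the lemma directly from the recurrence (\ref{RecGenDown}). For any $(u,v) \in g_\downarrow \cap R_{ij}$ the first coordinate is fixed to $u = i$, and since every point of $LB_{ij}$ lies in $[i-1,i]\times[j-1,j]$, the constraint $u' \le u$ inside the extremum of (\ref{RecGenDown}) is vacuous. It therefore suffices to show that as $v$ ranges over $\{v : (i,v) \in g_\downarrow \cap R_{ij}\}$, the $\preccurlyeq$-extremum of $\{(u',v') \in g_\downarrow \cap LB_{ij} \given v' \le v\}$ is the same point; plugging that single point into $r_\downarrow$ then gives a common value on the right-hand side of (\ref{RecGenDown}). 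The statement for $r^\uparrow$ on $B_{ij}$ follows by the completely symmetric argument from (\ref{(RecGenUp)}).

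To establish the claimed invariance I would split into two cases according to whether $g_\downarrow \cap B_{ij}$ is empty. If it is non-empty, let $(u^\ast, j-1)$ be its rightmost point. Every $(u',v') \in LB_{ij}$ satisfies $u' \le i$ and $v' \ge j-1$, while $u^\ast \ge i-1$, so $(u',v') \preccurlyeq (u^\ast, j-1)$ for every reachable point in $LB_{ij}$; since $v \ge j-1$ on $R_{ij}$, the constraint $v' \le v$ is automatically met by $(u^\ast, j-1)$, making it the extremum regardless of $v$. If instead $g_\downarrow \cap B_{ij} = \emptyset$, then on $L_{ij}$ (where $u' = i-1$ is fixed) the relation $\preccurlyeq$ reduces to reverse order in $v'$, so the $\preccurlyeq$-maximum of the interval $g_\downarrow \cap L_{ij}$ is its bottom endpoint $(i-1, v^\ast)$, and the recurrence applied to a reachable $(i,v)$ guarantees that the constrained set is non-empty, hence $v \ge v^\ast$ and the extremum is once more $v$-independent.

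The only delicate step is the second case, where one must argue that every reachable $(i,v) \in R_{ij}$ really does satisfy $v \ge v^\ast$. This is not an extra hypothesis but an automatic consequence of reachability: with $B_{ij}$ providing no entry, any monotone path witnessing $(i,v) \in g_\downarrow$ must traverse the cell from $L_{ij}$, and by convexity of $D_\varepsilon \cap D_{ij}$ together with (\ref{RecGenDown}) such a path exists precisely when $v$ lies at or above the low endpoint of $g_\downarrow \cap L_{ij}$. Once this observation is in hand, the lemma is an immediate corollary of (\ref{RecGenDown}) and (\ref{(RecGenUp)}).
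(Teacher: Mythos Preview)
Your proposal is correct and follows essentially the same approach as the paper: both argue directly from the recurrence (\ref{RecGenDown}) by showing that for $(u,v)\in R_{ij}$ the constraints $u'\le u$ and $v'\le v$ in the extremum are redundant, so the right-hand side is independent of $(u,v)$. The only difference is stylistic: where you split into the two cases $g_\downarrow\cap B_{ij}\neq\emptyset$ and $g_\downarrow\cap B_{ij}=\emptyset$, the paper handles them uniformly by observing that reachability of $(i,v)$ guarantees some $(u',v')\in g_\downarrow\cap LB_{ij}$ with $v'\le v$, and since the $\preccurlyeq$-maximum of $g_\downarrow\cap LB_{ij}$ has the minimal $v'$-coordinate, it automatically satisfies the constraint as well.
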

\begin{proof} 
For each pair of points $(u,v) \in R_{ij}$ and $(u',v') \in LB_{ij}$ the condition $u' \le u$ in (\ref{RecGenDown}) is fulfilled and may be omitted. 
For each point $(i,v) \in g_\downarrow \cap R_{ij}$ a point $(u',v') \in g_\downarrow \cap LB_{ij}$ exists such that $v' \le v $. 
Consequently, this condition in (\ref{RecGenDown}) also may be omitted. 
Relation (\ref{RecGenDown}) becomes
$$r_\downarrow(u,v) = r_\downarrow\Big(\extr\limits_\preccurlyeq\big\{(u',v') \in g_\downarrow \cap LB_{ij}\big\}\Big),$$ 
and $r_\downarrow(u,v)$ does not depend on $u$ and $v$, which proves the first statement of the lemma.\\
For each pair of points $(u,v) \in B_{ij}$ and $(u',v') \in TR_{ij}$ the condition $v' \ge v$ in (\ref{(RecGenUp)}) is fulfilled and may be omitted. 
For each point $(i,v) \in g^\uparrow \cap B_{ij}$ a point $(u',v') \in g^\uparrow \cap TR_{ij}$ exists such that $u' \ge u $. 
Consequently, this condition in (\ref{(RecGenUp)}) also may be omitted. 
Relation (\ref{(RecGenUp)}) becomes 
$$r^\uparrow(u,v) = r^\uparrow\Big(\extr\limits_\preccurlyeq\big\{(u',v') \in g^\uparrow \cap TR_{ij}\big\}\Big),$$ 
which proves the second statement of the lemma.
\end{proof}
Now relation (\ref{RecGenDown}) can be written in more detail. 
\begin{figure}[h]
 \centering
 \begin{subfigure}[b]{0.47\textwidth}
        \includegraphics[width=\textwidth]{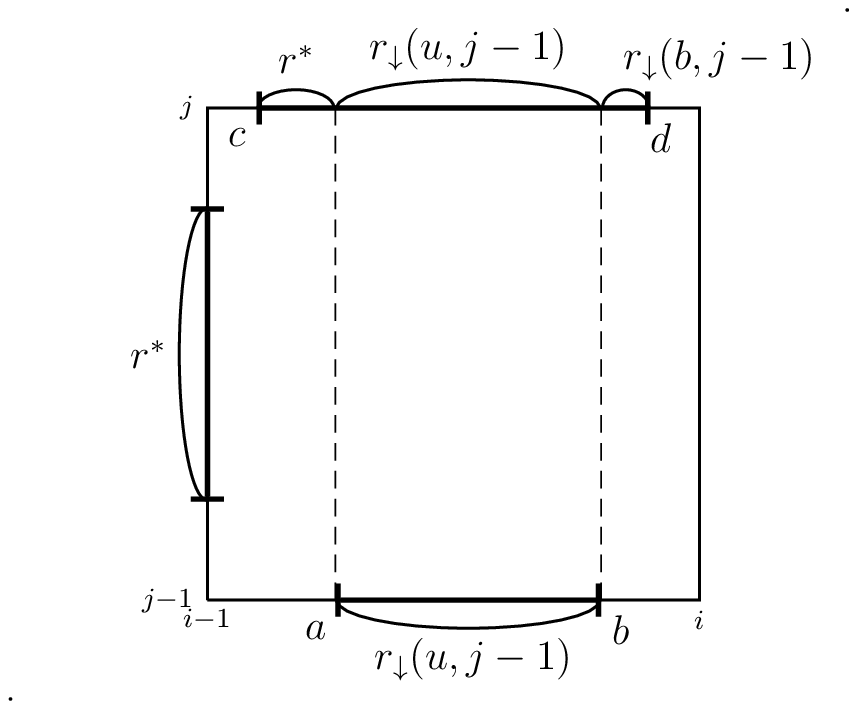}
        \caption{Pointer function $r_\downarrow$ on the upper border depends on $r_\downarrow$ on the lower and left border of the cell.}
        \label{FigRecursive_TB}
  \end{subfigure}
 \hspace{0.03\textwidth}
 \begin{subfigure}[b]{0.47\textwidth}
        \includegraphics[width=\textwidth]{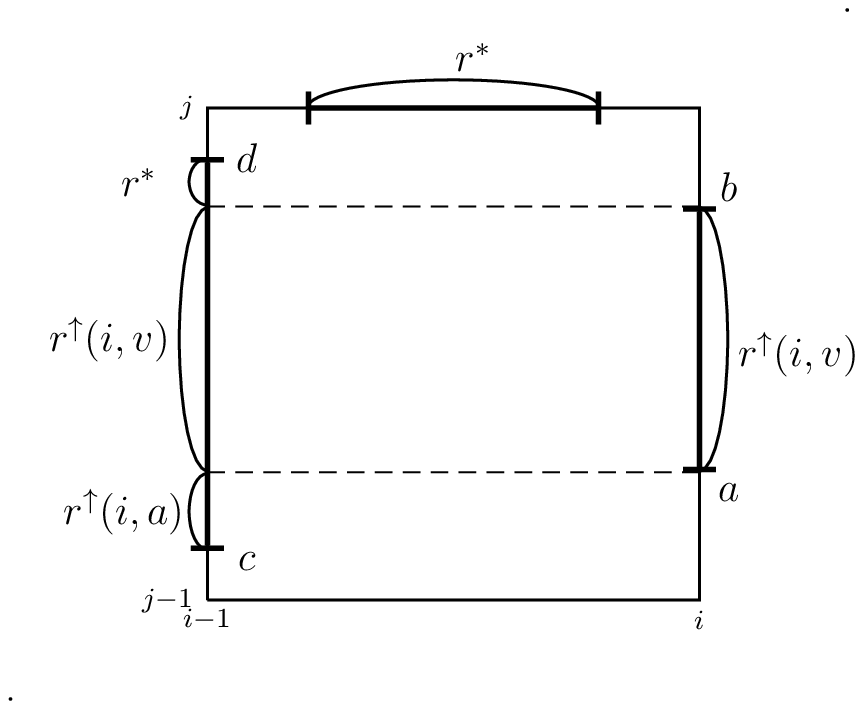}
        \caption{Pointer function $r^\uparrow$ on the left border depends on $r^\uparrow$ on the upper and right border of the cell.}
        \label{FigRecursive_LR}
  \end{subfigure} 
  \caption{Recursive dependency of pointer functions.}
\end{figure}
Let $r^*$ be the constant value of the pointer function $r_\downarrow$ 
on $g_\downarrow \cap L_{ij}$ for some $(i,j)$ and
$$ [a, b] = \left\{u\given(u,j-1) \in g_\downarrow \cap B_{ij}  \right\}$$ as it is shown on Fig \ref{FigRecursive_TB}.\\
If $g_\downarrow \cap B_{ij} = \emptyset$ then for $(u,v) \in g_\downarrow \cap TR_{ij}$
\begin{equation} \label{Recursion1}
    r_\downarrow (u,v)= r^*.
\end{equation}
If $g_\downarrow \cap B_{ij} \ne \emptyset$ then for $(u,j) \in g_\downarrow \cap T_{ij}$ and for $(i,v) \in g_\downarrow \cap R_{ij}$
\begin{numcases}{r_\downarrow (u,j) =}
    r^*                     &  $\mbox{if } u < a$, \label{Recursion2}\\
    r_\downarrow (u,j-1)    &  $\mbox{if } a \le u < b$, \label{Recursion3}\\
    r_\downarrow (b,j-1)    &  $\mbox{if } u \ge b$, \label{Recursion4}
\end{numcases}
\begin{equation} \label{Recursion5}
r_\downarrow (i,v)= r_\downarrow (b,j-1). \quad\quad\quad\quad\quad\quad\quad\quad
\end{equation}
Similarly, relation (\ref{(RecGenUp)}) can be specified. 
Let $r^*$ be the constant value of the pointer function $r^\uparrow$ on $g^\uparrow \cap T_{ij}$ for some $(i,j)$ and
$$[a, b] = \left\{v\given(i,v) \in g^\uparrow \cap R_{ij}\right\}$$
 as it is shown on Fig \ref{FigRecursive_LR}.\\ 
If $g^\uparrow \cap R_{ij} = \emptyset$ then for $(u,v) \in g^\uparrow \cap LB_{ij}$
 \begin{equation} \label{Recursion6}
 r^\uparrow (u,v)= r^*.
 \end{equation}
If $g^\uparrow \cap R_{ij} \ne \emptyset$ then for $(i-1,v) \in g^\uparrow \cap L_{ij}$ and for $(u,j-1) \in g^\uparrow \cap B_{ij}$
\begin{numcases}{r^\uparrow (i-1,v)= }
    r^*                 & $\mbox{if } v > b$, \label{Recursion7}\\
    r^\uparrow (i,v)    & $\mbox{if } a < v \le b$, \label{Recursion8}\\
    r^\uparrow (i,a)    & $\mbox{if } v \le a$, \label{Recursion9}
\end{numcases}
\begin{equation} \label{Recursion10}
r^\uparrow (u,j-1)= r^\uparrow (i,a). \quad\quad\quad\quad\quad\quad\quad\quad
\end{equation}
The following Lemma \ref{QuantityOfIntSimilar} states that restriction of $r^\uparrow$ to $g^\uparrow \cap R_{ij}$ is a piecewise constant function. 
Lemma \ref{QuantityOfInt} states that the restriction of $r_\downarrow$ to  $g_\downarrow \cap T_{ij}$ is also piecewise constant with the exception of at most one interval, where it is linear.

For any set $S$ we say that $I$ is a partition of $S$ if $S=\bigcup\limits_{int \in I} int$ and $int \cap int' = \emptyset$ for each pair $int,int' \in I$, $int \ne int'$. 
\begin{lemma}\label{QuantityOfIntSimilar}
For each $(i,j)$ a partition $I^\uparrow(i,j)$ of the set $g^\uparrow \cap R_{ij}$ to intervals exists 
such that the function $r^\uparrow$ is constant on each interval in $I^\uparrow(i,j)$; moreover, $|I^\uparrow(i,j)| \le 4m + 1$.
\end{lemma}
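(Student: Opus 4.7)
The plan is to induct on $i$ from $2m$ down to $0$ for fixed $j$, with the recursion (\ref{Recursion6})--(\ref{Recursion9}) driving each step. Because $T_{ij}=B_{i,j+1}$ for $j<n$, Lemma~\ref{Constancy} guarantees that $r^\uparrow$ is constant on $g^\uparrow \cap T_{ij}$, so the value $r^*$ appearing in the recursion is a single well-defined constant, and the recursion applies uniformly.

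For the inductive step I would argue $|I^\uparrow(i-1,j)|\le|I^\uparrow(i,j)|+2$. Since $g^\uparrow\cap L_{ij}$ is a single interval, the three clauses of the recursion partition it into at most three consecutive pieces: a top piece on which $r^\uparrow\equiv r^*$, a middle piece on which $r^\uparrow$ copies the partition of $g^\uparrow\cap R_{ij}$ restricted to the sub-interval $(a,b]$, and a bottom piece on which $r^\uparrow\equiv r^\uparrow(i,a)$. The middle piece contributes at most $|I^\uparrow(i,j)|$ intervals to $I^\uparrow(i-1,j)$, while the top and bottom pieces contribute at most one more constant interval each.

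For the base case $i=2m$ I would show $|I^\uparrow(2m,j)|\le 1$, i.e.\ that $r^\uparrow$ is constant on $g^\uparrow\cap R_{(2m)j}$. Because $R_{(2m)j}$ lies on the right edge of the rectangle $D$, any monotone path from $T$ to a point of this interval must terminate with a vertical descent along the right boundary inside the rightmost cell $D_{(2m)j}$. Using the convexity of $D_\varepsilon\cap D_{(2m)j}$, one can modify the terminal portion of such a path to reach any prescribed point of $g^\uparrow\cap R_{(2m)j}$, so the maximum reachable top-value $r^\uparrow(2m,v)$ does not depend on $v$. Combining base and step yields $|I^\uparrow(i,j)|\le 1+2(2m-i)\le 4m+1$.

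The hard part is the base case. Lemma~\ref{MyMonotony} gives one direction of the required equality for free, but the converse---turning a monotone path terminating at a lower point of $R_{(2m)j}$ into one terminating at a higher point with the same $u^*$---is subtle when the original path enters $D_{(2m)j}$ from its left edge at a height strictly below the target. The modification there must exploit the convexity of $D_\varepsilon$ inside the rightmost cell to re-route the path so that it meets the right edge higher up, which may force the proof to handle Case~(enters from $T_{(2m)j}$) and Case~(enters from $L_{(2m)j}$) separately.
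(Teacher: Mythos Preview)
Your induction scheme and inductive step match the paper's proof exactly: start at $i=2m$, show $|I^\uparrow(i-1,j)|\le|I^\uparrow(i,j)|+2$ via the three clauses of the recursion, and conclude $|I^\uparrow(i,j)|\le 4m-2i+1\le 4m+1$.

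The only divergence is the base case, which you flag as the ``hard part'' but which the paper handles in one line: if $(2m,v)\in g^\uparrow$ then $r^\uparrow(2m,v)=2m$. The reason is purely order-theoretic. If $(u^*,n)$ and $(2m,v)$ lie on a common monotone path with $v<n$, then $(u^*-2m)(n-v)\ge 0$ forces $u^*\ge 2m$, while $u^*\le 2m$ because $D=[0,2m]\times[0,n]$; hence $u^*=2m$. So the pointer is identically $2m$ on $g^\uparrow\cap R_{(2m)j}$ and no path-modification or convexity argument is needed. In particular, the ``subtle'' scenario you worry about---a monotone path from $T$ to $(2m,v)$ that enters $D_{(2m)j}$ through its left edge---cannot occur, since any interior point of that path would have $u$-coordinate between $u^*=2m$ and $2m$, i.e.\ the path is the vertical segment on the right boundary.
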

\begin{proof}
Indeed, either $g^\uparrow \cap R_{(2m)j}$ is empty or pointer $r^\uparrow$ is constant on $g^\uparrow \cap R_{(2m)j}$ and equals $2m$.
Therefore, $I^\uparrow(2m,j)$ consists of no more than one interval.
It follows from recursive relations (\ref{Recursion6})-(\ref{Recursion10}) that
there are no more than two intervals that belong to $I^\uparrow(i-1,j)$ and do not belong to $I^\uparrow(i,j)$.
The first interval comes from (\ref{Recursion7}).
The second interval comes from (\ref{Recursion9}).
The pointer $r^\uparrow (i-1,v)$ is constant on each of these two intervals.
Therefore,  $|I^\uparrow(2m,j)| \le 1$, $|I^\uparrow(i-1,j)| \le |I(i,j)|+2$,
and finally $|I^\uparrow(i,j)| \le 4m -2i +1 \le 4m +1$. 
\end{proof}
\begin{lemma}\label{QuantityOfInt}
For each $(i,j)$ a partition $I_\downarrow(i,j)$ of $g_\downarrow \cap T_{ij}$ into intervals exists with the following properties: \\
-- there is no more than one interval $int \in I(i,j)$ such that $r_\downarrow (u,j) = u$ on $int$;\\
-- function $r_\downarrow$ is constant on all other intervals;\\
-- moreover, $|I_\downarrow(i,j)| \le 2n +1$.
\end{lemma}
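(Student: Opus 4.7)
The plan is to prove the lemma by induction on $j$ along a fixed column $i$, treating the three assertions---uniqueness of the linear interval, constancy of $r_\downarrow$ on all other intervals, and the cardinality bound---as a single invariant propagated upward by the recursion (\ref{Recursion1})--(\ref{Recursion5}), very much in the spirit of the proof of Lemma~\ref{QuantityOfIntSimilar}.

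First I would extend the statement to include $j=0$ as the base case. Since $T_{i0}$ is a subset of the bottom side $B$, we have $r_\downarrow(u,0)=u$ on $g_\downarrow \cap T_{i0}$, which is a single interval. Hence $|I_\downarrow(i,0)|\le 1$ and this sole interval is the linear one, so the invariant holds at $j=0$.

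For the inductive step $j-1\to j$, I would do a case split corresponding to the recursion. If $g_\downarrow \cap B_{ij}=\emptyset$, then by (\ref{Recursion1}) the function $r_\downarrow$ is constant on the whole of $g_\downarrow \cap T_{ij}$, so $|I_\downarrow(i,j)|\le 1$ and no linear interval appears; in fact once this case occurs the linear interval is lost forever in this column, which is harmless for the invariant. Otherwise, by (\ref{Recursion2})--(\ref{Recursion4}) the set $g_\downarrow \cap T_{ij}$ decomposes according to $u<a$, $a\le u<b$, $u\ge b$: on the two flanking regions $r_\downarrow$ equals $r^*$ and $r_\downarrow(b,j-1)$, contributing at most two new constant intervals; on the middle region $r_\downarrow(u,j)=r_\downarrow(u,j-1)$, so the partition on $[a,b]$ is simply the trace of $I_\downarrow(i,j-1)$.

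The crucial observation is that trimming $I_\downarrow(i,j-1)$ to $[a,b]$ does not create new intervals---it only truncates or deletes existing ones---and in particular the at-most-one linear interval of $I_\downarrow(i,j-1)$ is either truncated or removed, never duplicated. Combined with the fact that both flanking intervals are constant, this preserves the ``at most one linear interval, all others constant'' invariant and yields the recurrence $|I_\downarrow(i,j)|\le |I_\downarrow(i,j-1)|+2$. Iterating from $|I_\downarrow(i,0)|\le 1$ gives $|I_\downarrow(i,j)|\le 2j+1\le 2n+1$. The main subtlety I foresee is verifying rigorously that the middle region really inherits its structure one-to-one from $I_\downarrow(i,j-1)$: this follows from the pointwise identity $r_\downarrow(\cdot,j)=r_\downarrow(\cdot,j-1)$ on $[a,b]$, which preserves both values and interval boundaries verbatim inside this range.
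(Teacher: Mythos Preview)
Your proposal is correct and follows essentially the same inductive argument as the paper: base case at $j=0$ with a single linear interval, then the recurrence $|I_\downarrow(i,j)|\le |I_\downarrow(i,j-1)|+2$ obtained from (\ref{Recursion2})--(\ref{Recursion4}), yielding $2j+1\le 2n+1$. You are somewhat more explicit than the paper in tracking the ``at most one linear interval'' invariant and in handling the empty case (\ref{Recursion1}), but the underlying idea is identical.
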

\begin{proof}
By definition, $r_\downarrow (u,0)=u$ for each $(u,0) \in g_\downarrow \cap T_{i0}$.
Therefore, partition $I_\downarrow(i,0)$ consists of no more than one interval.
It follows from recursive relations (\ref{Recursion1})-(\ref{Recursion5}) that
there are no more than two intervals that belong to $I_\downarrow(i,j)$ and do not belong to $I_\downarrow(i,j-1)$.
The first of them is included according to relation (\ref{Recursion2}).
The second interval appears in partition $I_\downarrow(i,j)$ when $u \ge b$ in relation (\ref{Recursion4}).
The function $r_\downarrow (u,j)$ is constant on each of these two intervals.
Therefore,  $|I_\downarrow(i,0)| \le 1$, $|I_\downarrow(i,j)| \le |I_\downarrow(i,j-1)|+2$, and finally $|I_\downarrow(i,j)| \le 2j +1 \le 2n+1$.
\end{proof}
According to Lemma \ref{Constancy} the set $g^\uparrow \cap B$ and the restriction of a function $r^\uparrow$ to this set can be expressed with subsets $g^\uparrow \cap B_{i1}$, $i \in \{1,2, \dots , 2m \}$, 
and values $r_i^\uparrow$ of a function $r^\uparrow$ on these subsets.
According to Lemma \ref{QuantityOfInt} the set $g_\downarrow \cap T$ and the restriction of $r_\downarrow$ to this set can be expressed with
the sets $I_\downarrow(i,n)$ of intervals $int$ and with numbers $r_\downarrow^{int}$, where $int \in I_\downarrow(i,n)$, $i \in \{1,2, \dots , 2m \}$.
Numbers $r_\downarrow^{int}$ determine the function $r_\downarrow$ on $int$ so that if $r_\downarrow^{int}$ is less than the right endpoint of $int$ then $r_\downarrow(u,n)=r_\downarrow^{int}$ for all $(u,n) \in int$.  
Otherwise, $r_\downarrow(u,n)=u$ for all $(u,n) \in int$. 
\begin{lemma}\label{TestingOfMainInequality}
Let $X$ and $Y$ be closed polygonal curves and for each $i \in \{1,2, \dots , m \}$ the following data be known:\\
- the set $g^\uparrow \cap T_{i0}$ with pointer $r_i^\uparrow$;\\
- partition $I_\downarrow(i+m,n)$ of $g_\downarrow \cap T_{(i+m)n}$;\\
- value $r_\downarrow^{int}$ for each interval $int \in I_\downarrow(i+m,n)$;\\
then testing $\delta(X,Y) \le \varepsilon$ takes $O(m n)$ time.

\end{lemma}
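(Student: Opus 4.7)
The plan is to search directly for a witness $u$ of condition (\ref{OurDecision}) by iterating column by column along the bottom border, exploiting the fact that the given data encodes the relevant restrictions of the pointer functions in only $O(n)$ pieces per column.

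\textbf{Setup.} By Lemmas \ref{AltFirst} and \ref{OurSecond} it suffices to decide whether some $u \in [0,m]$ satisfies the four inequalities of (\ref{OurDecision}). Any such $u$ lies in $[i-1,i]$ for a unique $i \in \{1,\dots,m\}$, which in turn forces $(u,0) \in T_{i0}$ and $(u+m,n) \in T_{(i+m)n}$; the search therefore splits naturally over $i$.

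\textbf{Per-column test.} Fix $i$. By Lemma \ref{Constancy} the pointer $r^\uparrow$ equals the given constant $r_i^\uparrow$ on the whole interval $A_i := g^\uparrow \cap T_{i0}$, so the condition $u+m \le r^\uparrow(u,0)$ reduces to the linear constraint $u \le r_i^\uparrow - m$. On the top border, $g_\downarrow \cap T_{(i+m)n}$ is partitioned by $I_\downarrow(i+m,n)$ into intervals, each of which falls into one of the two regimes described after Lemma \ref{QuantityOfInt}: either $r_\downarrow$ is constantly equal to the given $r_\downarrow^{int}$, or $r_\downarrow(u+m,n)=u+m$ on the interval. For each such $int$ shift it by $-m$ to obtain $\widehat{int}\subset[i-1,i]$; the set of $u$ that satisfies all of (\ref{OurDecision}) while having $u+m \in int$ is
\[
A_i \cap \widehat{int} \cap (-\infty,\, r_i^\uparrow - m] \cap (-\infty,\, \rho],
\]
where $\rho = r_\downarrow^{int}$ in the constant regime and $\rho = +\infty$ in the identity regime (since then $u \le r_\downarrow(u+m,n) = u+m$ is automatic). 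Being an intersection of a bounded number of explicit intervals, its nonemptiness is tested in $O(1)$ time.

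\textbf{Complexity.} Looping over $i \in \{1,\dots,m\}$ and over all $int \in I_\downarrow(i+m,n)$, the total number of constant-time tests is
\[
\sum_{i=1}^m |I_\downarrow(i+m,n)| \;\le\; m(2n+1) \;=\; O(mn)
\]
by Lemma \ref{QuantityOfInt}. The algorithm answers ``$\delta(X,Y)\le\varepsilon$'' as soon as some test succeeds and answers in the negative otherwise. The only real subtlety is correctly distinguishing the two regimes of $r_\downarrow$ on each interval of $I_\downarrow(i+m,n)$ and remembering that in the identity regime the constraint $u \le r_\downarrow(u+m,n)$ is vacuous; beyond that, everything reduces to routine bookkeeping.
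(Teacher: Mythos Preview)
Your proof is correct and follows essentially the same route as the paper: split over columns $i$, then over intervals $int\in I_\downarrow(i+m,n)$, reduce each instance to the intersection of finitely many explicit intervals, and bound the total work by $m(2n+1)=O(mn)$ via Lemma~\ref{QuantityOfInt}. The only cosmetic difference is that in the identity regime you set $\rho=+\infty$ to drop the constraint $u\le r_\downarrow(u+m,n)$, whereas the paper observes that the single inequality $u\le r_\downarrow^{int}$ is valid in both regimes (in the identity regime $r_\downarrow^{int}$ equals the right endpoint of $int$, so this inequality is implied by $(u+m,n)\in int$); the resulting tests are equivalent.
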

\begin{proof} According to Lemmas \ref{AltFirst} and \ref{OurSecond}, inequality $\delta(X,Y) \le \varepsilon$ is equivalent to the existence of a number $u \in [0,m]$ 
that fulfills (\ref{OurDecision}).
Such $u$ exists if and only if a triple 
$i \in \{ 1,2, \dots , m \}$, $int \in I_\downarrow(i+m,n)$, $u \in [0,m]$ exists that fulfills conditions
\begin{equation}\label{OurDecisionConcreteNew }
(u,0) \in g^\uparrow \cap T_{i0}, \quad (u+m,n) \in int, \quad u+m \le r_i^\uparrow,
\quad u \le r_\downarrow(u+m,n).  
\end{equation}
The condition $u \le r_\downarrow(u+m,n)$ in (\ref{OurDecisionConcreteNew }) can be replaced with condition $u \le r_\downarrow^{int}$
independently of whether the pointer $r_\downarrow(u,n)$ takes constant value $r_\downarrow^{int}$ on $int$ or $r_\downarrow(u,n)=u$.
In both cases, condition (\ref{OurDecisionConcreteNew }) is equivalent to 
\begin{equation} \nonumber 
(u,0) \in g^\uparrow \cap T_{i0}, \quad (u+m,n) \in int, \quad u+m \le r_i^\uparrow,
\quad u \le r_\downarrow^{int},
\end{equation}
or in more detail
\begin{equation} \label{OurDecisionWeekConcrete2 }
c_{i} \le u \le d_{i}, \quad a^{int}-m \le u \le b^{int}-m, \quad u \le r_{i}^\uparrow - m,
\quad u \le r_\downarrow^{int},
\end{equation}
where 
 $a^{int}$ and $b^{int}$ are the left and right endpoints of the interval $int$ and $c_{i}$ and  $d_{i}$ are the horizontal coordinates of the leftmost and the rightmost points of $g^\uparrow \cap T_{i0}$. 
In turn, a triple $i \in \{ 1,2, \dots , m \}$, $int \in I_\downarrow(i+m,n)$, $u \in [0,m]$ that satisfies (\ref{OurDecisionWeekConcrete2 }) exists iff a pair 
$i \in \{ 1,2, \dots , m \}$, $int \in I_\downarrow(i+m,n)$ exists  that satisfies
$$\max \{c_i,  a^{int}-m  \} \le 
\min \{ d_i,   b^{int}-m, r_i^\uparrow - m, r_\downarrow^{int} \}.$$ 
Testing the last inequality takes constant time for any pair $(i,int)$. 
According to Lemma \ref{QuantityOfInt} the number of intervals tested for each $i$ does not exceed $(2n+1)$. Therefore, the number of tested pairs $(i,int)$ is $O(m n)$. 
\end{proof}


\section{Computing pointer functions}\label{InputDatar}
\subsection{The general scheme}
The partition of $T$ and $B$ into intervals that represent $r^\uparrow$ and $r_\downarrow$ is obtained 
by two similar independent algorithms called the forward and backwards pass. 
Both passes consist of $2mn$ steps and in each step compute a pointer function on a border of some cell.
From now on we will refer to the pair $(i, j)$ as the number of the step.

The forward pass proceeds cell by cell from left to right and from bottom to top and computes pointers $r_\downarrow$
on $TR_{ij}$ based on pointers $r_\downarrow$ on $LB_{ij}$ according to (\ref{Recursion1})-(\ref{Recursion4}) starting from cell $(1, 1)$.
The result of the forward pass are the partitions of $T_{(i+m)n}$,  $1 \le i \le m$, that represent $r_\downarrow$ on $T$.

Similarly, the backward pass starts from cell $(2m, n)$ and proceeds from right to left and from top to bottom and computes
$r^\uparrow$ on $LB_{ij}$ based on $r^\uparrow$ on $TR_{ij}$ according to (\ref{Recursion6})-(\ref{Recursion9}).
The result of the backward pass are the constant values $r_i^\uparrow$ of the pointer function $r^\uparrow$ on $g^\uparrow \cap T_{i0}$ for $1 \le i \le m$.

During the forward pass the function $r_\downarrow$ on $g_\downarrow \cap T_{ij}$ as the partition $I_\downarrow(i,j)$ of $T_{ij}$ is stored in the following data structure.
The function $r_\downarrow$ on each interval $int \in I_\downarrow(i,j)$ is represented by a triple $(beg^{int}, end^{int}, r_\downarrow^{int})$,
where $beg^{int}$ and $end^{int}$ are the endpoints of interval $int$ and $r_\downarrow^{int}$ has the meaning defined right before Lemma \ref{TestingOfMainInequality}.
The triples $(beg, end, val)$ sorted by their endpoints are stored in a double-ended queue (deque) with the following operations performing in constant time:\\
-- reading and removing either the leftmost or the rightmost triple;\\
-- pushing a triple either to the left or to the right end of the deque.\\
 For a given number $x$ the operations of cutting the deque to the left of $x$ and cutting to the right of $x$ are defined.
Cutting to the left of $x$ means removing all triples $(beg, end, val)$ where $end < x$ from the deque 
and replacing a triple $(beg, end, val)$ where $beg \le x < end$ with a triple $(x, end, val)$.
Cutting to the right of $x$ means removing all triples $(beg, end, val)$ where $beg > x$ and
replacing a triple $(beg, end, val)$ where $beg \le x \le end$ either with a triple $(beg, x, val)$ if $val < end$, or with a triple $(beg, x, x)$ if $val = end$. 
Since the triples in the deque are sorted, the time spent on cutting the deque is proportional to the number of triples removed from the deque. 
Note that each cut of the deque performs only one push to the deque.


The partitions $I^\uparrow(i,j)$ of the sets $g^\uparrow \cap R_{ij} $ and functions $r^\uparrow$ on $g^\uparrow \cap R_{ij} $ are 
stored in the same data structures. Of course, instead of the leftmost and the rightmost triple there are the lowest and the highest triple. 

The forward and the backward passes rely on the sets 
$g_\downarrow \cap R_{ij}$, $g_\downarrow \cap T_{ij}$, $g_\uparrow \cap R_{ij}$ and $g_\uparrow \cap T_{ij}$ to be precomputed.
It has been mentioned at the beginning of Section \ref{Achievability} that these sets can be computed in a straightforward way.

\subsection{The forward pass}\label{Forward}
The forward pass works with $2m$ deques $Q_\downarrow(i)$,  $i \in \{1, 2, \dots , 2m \}$, 
whose content depends on the number $(i,j)$ of the step. 
The $(i,j)$-th step starts with the known value $r^*(i-1,j)$ of $r_\downarrow$ on $L_{ij}$ and with $Q_\downarrow(i)$ representing $r_\downarrow$ on $B_{ij}$.
The $(i,j)$-th step updates the deque $Q_\downarrow(i)$ to represent $r_\downarrow$ on $T_{ij}$ and computes the constant value $r^*(i,j)$ of $r_\downarrow$ on $R_{ij}$.

As it has been shown during the proof of Lemma \ref{QuantityOfInt} a partition of non-empty set $g_\downarrow \cap B_{i1}$ consists of a single interval. 
Therefore, at the start of the $(i,1)$-th step the deque $Q_\downarrow(i)$ is 
either empty (if $g_\downarrow \cap B_{i1} = \emptyset$) or includes a single triple $(beg, end, end)$ 
where $beg$ and $end$ are the horizontal coordinates of the leftmost and the rightmost points in $g_\downarrow \cap B_{i1}$.
The pointers  $r^*(0,j)$ are obviously equal to $0$ for each non-empty $g_\downarrow \cap L_{1j}$.
  
When all four sets $g_\downarrow \cap R_{ij}$,  $g_\downarrow \cap T_{ij}$, $g_\downarrow \cap L_{ij}$,  $g_\downarrow \cap B_{ij}$ are non-empty 
the update of $Q_\downarrow(i)$ and the computation of $r^*(i,j)$ is done as follows.
Let
$$[a, b] = \left\{u\given(u,j-1) \in g_\downarrow \cap B_{ij}  \right\}, \quad [c, d] = \left\{u\given(u,j) \in g_\downarrow \cap T_{ij}    \right\},$$
1. {\bf if} $c<a$ {\bf then} push $(c, a, r^*(i-1,j))$ to the left end of $Q_\downarrow(i)$;\\
2. \phantom{{\bf if} $c<a$} {\bf else} cut $Q_\downarrow(i)$ to the left of $c$;\\  
3. read a triple $(beg, end, r^*)$ from the right and save the value $r^*$; \\
4. {\bf if} $b<d$ {\bf then} push $(b, d, r^*)$ to the right end of $Q_\downarrow(i)$;\\
5. \phantom{{\bf if} $b<d$} {\bf else} cut $Q_\downarrow(i)$ to right of $d$;\\
6. $r^*(i,j)=r^*$.\\
The Operations 1, 4 and 6 represent relations (\ref{Recursion2}), (\ref{Recursion4}) and (\ref{Recursion5}) directly. 
The relation (\ref{Recursion3}) is represented with Operations 2 and 4 indirectly in a sense that the uncut part of $Q_\downarrow(i)$ remains unchanged.

It is not necessary to consider all special cases when some of the sets 
$g_\downarrow \cap R_{ij}$,  $g_\downarrow \cap T_{ij}$, $g_\downarrow \cap L_{ij}$,  $g_\downarrow \cap B_{ij}$ are empty. 
In each of these cases the $(i,j)$-th step consists of some part of Operations 1-6 or their slight modifications. 
Since we are mainly interested in the computational complexity of the algorithm, considering the above described complete case is sufficient.

One can see that one step of the forward pass does not perform in constant time due to the cutting of the deque that can take $O(j)$ time for the $(i,j)$-th step. 
Nevertheless, the complexity of the forward pass is $O(m n)$ as the following lemma states.  

\begin{lemma}\label{ComplexityForward}
It takes $O(m n)$ time to complete the forward pass.
\end{lemma}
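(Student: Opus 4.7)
The plan is a standard amortized (aggregate) analysis of the deque operations, exploiting the remark already highlighted in the algorithm description: every cut removes an arbitrary number of triples but adds at most one.

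First I would separate the cost of each $(i,j)$-th step into two parts: the \emph{constant-cost} part consisting of Operations 1, 3, 4, 6 and the bookkeeping surrounding Operations 2 and 5 (reading endpoints, comparing $a,b,c,d$, updating $r^{*}(i,j)$), and the \emph{variable-cost} part, namely the actual removal of triples inside the cuts. Since there are exactly $2mn$ steps and the constant-cost part is $O(1)$ per step, its total contribution is $O(mn)$, independently of how the deques evolve.

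Next I would bound the total variable cost by a charging argument on the triples. Inspecting Operations 1--5 one sees that each step inserts at most two new triples into $Q_\downarrow(i)$: one from Operation 1 or from the single replacement triple produced by the left cut in Operation 2, and one from Operation 4 or from the single replacement triple produced by the right cut in Operation 5. Therefore, across all $2mn$ steps the total number of triples ever pushed into any deque is $O(mn)$. I would also check, for the sake of completeness, that the initialization of the deques $Q_\downarrow(i)$ for the bottom row contributes only $O(m)$ additional initial triples, which is absorbed into the same bound.

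Because the time spent on a cut is proportional to the number of triples it removes, and a triple can be removed at most once after having been pushed, the total removal time summed over all cuts is bounded by the total number of pushes, i.e.\ $O(mn)$. Adding the two contributions yields the claimed $O(mn)$ bound on the forward pass.

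The only subtle point — and the place where I would be most careful writing the proof — is verifying that Operations 2 and 5 really never add more than one triple each: Operation 2 clips the leftmost overlapping triple to $(c,\mathit{end},\mathit{val})$, and Operation 5 clips the rightmost overlapping triple either to $(\mathit{beg},d,\mathit{val})$ or to $(\mathit{beg},d,d)$, exactly as described immediately before the subsection; no new triple is generated elsewhere in a cut. Once this is confirmed, the amortized bound follows directly and the lemma is proved.
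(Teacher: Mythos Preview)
Your proof is correct and follows essentially the same amortized (aggregate) argument as the paper: bound the number of pushes by $O(1)$ per step, hence $O(mn)$ in total, and observe that the number of pops (and thus the cost of all cuts) cannot exceed the number of pushes plus the $O(m)$ initial triples. The only cosmetic difference is that the paper counts three pushes per step rather than two, because it treats the read in Operation~3 as a pop followed by a re-push of the same triple; since that pop and push cancel, your accounting is equally valid.
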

\begin{proof}
The forward pass starts with initializing the deques $Q_\downarrow(i)$, $i \in \{1,2, \dots, 2m\}$, and the pointers $r^*(0,j) = 0$, $j \in \{1, 2, \dots, n\}$. 
Evidently, it takes $O(m + n)$ time. 

No more than three triples are pushed in deque on each step. 
One triple is pushed either with Operation 1 or with  Operation 2, the second is pushed either with Operation 4 or with Operation 5. 
The third push is made with Operation 3 because the triple that was read and removed from the deque has to be returned to the deque. 
Therefore, no more than $6mn$ triples are pushed during all steps.

Reading and removing triples are fulfilled with Operations 2, 3 and 4. 
Number of these operations in $(i,j)$-th step may differ from number of insertions in this step. 
However, the number of readings and removing during the whole forward pass cannot exceed the total number of insertions, consequently, cannot be greater than $6mn$.
\end{proof}              
\subsection{The backward pass}\label{Backward}
The backward pass works with $n$ deques $Q^\uparrow(j)$,  $j \in \{1, 2, \dots , n \}$, whose content depends on the number $(i,j)$ of the step. 
The $(i,j)$-th step starts with the known value $r^*(i,j)$ of the pointer $r^\uparrow$ on $g^\uparrow \cap T_{ij}$
and with the deque $Q^\uparrow(j)$ representing $r^\uparrow$ on $g^\uparrow \cap R_{ij}$.
The $(i,j)$-th step updates $Q^\uparrow(j)$ to represent $r^\uparrow$ on $g^\uparrow \cap L_{ij}$ 
and computes the value $r^*(i,j-1)$ of the pointer $r^\uparrow$ on $g^\uparrow \cap B_{ij}$.

When all four sets $g^\uparrow \cap R_{ij}$,  $g^\uparrow \cap T_{ij}$, $g^\uparrow \cap L_{ij}$,  $g^\uparrow \cap B{ij}$ are non-empty the update of $Q^\uparrow(j)$ is done as follows.
Let
$$
[a, b] = \left\{v\given(i,v) \in g^\uparrow \cap R_{ij}  \right\}, \quad [c, d] = \left\{v\given(i-1,v) \in g^\uparrow \cap L_{ij} \right\},
$$
1. {\bf if} $b < d$ {\bf then} push $(b, d, r^*(i,j))$ to the upper end of $Q^\uparrow(j)$;\\
2. \phantom{{\bf if} $b < d$} {\bf else} $Q^\uparrow(j)$ is cut to the up of $d$;\\  
3. read a triple $(beg, end, r^*)$ from the lower end and save the value $r^*$; \\
4. {\bf if} $c \le a$ {\bf then} push $(c, a, r^*)$ to the lower end of $Q^\uparrow(j)$;;\\
5. \phantom{{\bf if} $c \le a$} {\bf else}  cut $Q^\uparrow(j)$ down of $c$;\\
6. $r^*(i,j-1)=r^*$.\\
The Operations 1, 4 and 6 represent relations (\ref{Recursion7}), (\ref{Recursion9}) and (\ref{Recursion10}) directly. 
The relation (\ref{Recursion8}) is represented with Operations 2 and 5 indirectly in a sense that the uncut part of $Q^\uparrow(j)$ remains unchanged.

\begin{lemma}\label{ComplexityBackward}
It takes $O(m n)$ time to complete the backward pass.
\end{lemma}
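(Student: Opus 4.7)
The plan is to mirror the amortized argument used for Lemma \ref{ComplexityForward}, since the backward pass is structurally symmetric to the forward pass: it performs $2mn$ steps, each consisting of Operations~1--6 on a deque $Q^\uparrow(j)$, and the only potentially super-constant work per step is the deque cutting in Operations~2 and~5.

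First I would account for initialization: the $n$ deques $Q^\uparrow(j)$ must be set up, together with the starting values $r^*(2m,j)$ of the pointer $r^\uparrow$ on $g^\uparrow \cap R_{(2m)j}$. Analogously to the forward pass, each initial non-empty $g^\uparrow \cap R_{(2m)j}$ contributes a single triple (by Lemma~\ref{QuantityOfIntSimilar} applied at $i=2m$, where the partition has at most one interval on which $r^\uparrow \equiv 2m$), so this phase costs $O(m+n)$.

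Next I would bound the total number of triples ever pushed into any deque. Inspection of the step shows that at most three pushes occur per $(i,j)$-th step: one from Operation~1 or~2 (the cut in~2 performs at most one replacement push as noted in the paragraph defining cuts), one from Operation~4 or~5, and one from Operation~3, which must return the triple it removed at the lower end. Summed over the $2mn$ steps, this gives at most $6mn$ pushes, plus the $O(m+n)$ initial ones.

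Finally, I would amortize the cutting cost against these pushes. Each removal in Operations~2, 3, or~5 eliminates a triple that was previously inserted, and a triple can be removed at most once, so the total number of removals over the whole backward pass is bounded by the total number of insertions, hence $O(mn)$. Since every individual deque operation (read, push, single removal) runs in constant time, the overall running time is $O(mn)$. The only subtlety, and the place I would be most careful, is that the cuts may each perform one replacement push; this is already counted in the per-step bound of three pushes above, so the accounting remains tight.
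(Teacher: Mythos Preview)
Your proposal is correct and follows exactly the approach the paper intends: the paper's own proof simply states that it ``is based on the same considerations as the proof of Lemma~\ref{ComplexityForward},'' and you have faithfully unrolled those considerations for the backward pass (initialization in $O(m+n)$, at most three pushes per step for $6mn$ total, and amortizing removals against insertions). There is nothing to add.
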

\begin{proof}
The proof is based on the same considerations as the proof of Lemma \ref{ComplexityForward}. 
\end{proof}


\section{The result}\label{ResultSection}
\begin{theorem} 
Let $X$ and $Y$ be closed polygonal curves with $m$ and $n$ vertices and $\delta(X,Y)$ be the Fr\'echet distance between them. 
Testing the inequality $\delta(X,Y) \le \varepsilon$ takes $O(m n)$ time.
\end{theorem}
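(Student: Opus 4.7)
The plan is to recognise that the theorem is essentially the synthesis of the ingredients already assembled in Sections~\ref{Achievability}--\ref{InputDatar}: the cell-wise free-space precomputation, the forward pass of Section~\ref{Forward}, the backward pass of Section~\ref{Backward}, and the decision procedure of Lemma~\ref{TestingOfMainInequality}. Since each stage runs in $O(mn)$ time, the total cost is $O(mn)$.

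Concretely, I would lay out four sequential stages. First, precompute the cell-wise intersections $g_\downarrow \cap T_{ij}$, $g_\downarrow \cap R_{ij}$, $g^\uparrow \cap B_{ij}$, and $g^\uparrow \cap L_{ij}$ for all $(i,j)$; by the convexity of $D_\varepsilon \cap D_{ij}$ and the constant-time per-cell recursion recalled at the start of Section~\ref{Achievability}, this fits in $O(mn)$ time and space. Second, invoke the forward pass, which by Lemma~\ref{ComplexityForward} takes $O(mn)$ time and, for each $i \in \{1, \dots, m\}$, delivers the partition $I_\downarrow(i+m, n)$ of $g_\downarrow \cap T_{(i+m)n}$ along with the value $r_\downarrow^{int}$ attached to every $int \in I_\downarrow(i+m, n)$, stored as endpoint/value triples in the corresponding deque. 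Third, invoke the backward pass, which by Lemma~\ref{ComplexityBackward} also costs $O(mn)$; combined with the constancy guarantee of Lemma~\ref{Constancy}, it yields the single value $r_i^\uparrow$ of $r^\uparrow$ on $g^\uparrow \cap T_{i0}$ for each $i \in \{1, \dots, m\}$.

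Fourth, hand this data to Lemma~\ref{TestingOfMainInequality}. Its hypotheses match precisely the outputs of the first three stages, so it decides $\delta(X, Y) \le \varepsilon$ in a further $O(mn)$ time. Summing, the overall running time is $O(mn)$, which is the theorem's claim.

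The only delicate point has already been discharged inside the pass lemmas rather than at this level: a single step of either pass can trigger a deque cut of length $\Theta(j)$, so a worst-case per-step accounting would give a super-linear total. Lemmas~\ref{ComplexityForward} and~\ref{ComplexityBackward} avoid this with an amortised argument---at most a bounded number of pushes occurs per step, hence the total number of pushes, pops and cuts across a whole pass is $O(mn)$. Once those lemmas are in hand, the theorem itself is a bookkeeping composition and presents no further obstacle.
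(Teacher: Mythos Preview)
Your proposal is correct and follows essentially the same approach as the paper: precompute the cell-wise free-space data in $O(mn)$, run the forward pass (Lemma~\ref{ComplexityForward}) and the backward pass (Lemma~\ref{ComplexityBackward}) each in $O(mn)$, and then apply Lemma~\ref{TestingOfMainInequality} for the final $O(mn)$ decision. The paper's own proof is just a terser enumeration of the same four stages; your additional remark about the amortised accounting inside the pass lemmas is accurate and helpful but already handled there, as you note.
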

\begin{proof}
Testing the inequality $\delta(X,Y) \le \varepsilon$ is reduced to the following computations.
\\
Computing the sets $D_\varepsilon \cap T_{ij}$ and $D_\varepsilon \cap R_{ij}$, $0 \le i \le 2m$, $0 \le j \le n$, 
is reduced to solving $2mn$ quadratic equations and takes $O(mn)$ time.
\\
Computing subsets $g_\downarrow \cap T_{in}$ and $g^\uparrow \cap B_{i1}$, $ 1 \le  i \le 2m $, takes $O(m n)$ time.\\  
Due to Lemma \ref{ComplexityForward} computing the restrictions of $r_\downarrow$ to $g_\downarrow \cap T_{in}$  takes $O(m n)$ time.
\\Due to Lemma \ref{ComplexityBackward} computing the restrictions of  $r^\uparrow$ to $g^\uparrow \cap B_{i1}$  takes $O(m n)$ time.
\\
Due to Lemma \ref{TestingOfMainInequality} testing the inequality $\delta(X,Y) \le \varepsilon$ based on these data takes $O(m n)$ time.
\end{proof}

\end{document}